\begin{document}
\CopyrightYear{2016}
\setcopyright{acmcopyright}
\conferenceinfo{FSE'16,}{November 13-19, 2016, Seattle, WA, USA}
\isbn{978-1-4503-4218-6/16/11}\acmPrice{\$15.00}
\doi{http://dx.doi.org/10.1145/2950290.2950346}
 % Add this to every tex file, so that you can comment with a diff color%
\definecolor{gold}{rgb}{0.90,.66,0}
\definecolor{dgreen}{rgb}{0,0.6,0}
\newcommand{\mike}[1]{\textcolor{red}{#1}}
\newcommand{\fixed}[1]{\textcolor{purple}{#1}}
\newcommand{\andrew}[1]{\textcolor{green}{#1}}
\newcommand{\ela}[1]{\textcolor{blue}{#1}}
\newcommand{\stateequiv}{\equiv_{s}}
\newcommand{\traceequiv}{\equiv_{\sigma}}
\newcommand{\ta}{\text{TA}}
\newcommand{\cta}{\text{TA$_{C}$}}
\newcommand{\tta}{\text{TA$_{T}$}}

\newdef{lemma}{Lemma}
\newdef{definition}{Definition}
\newdef{theorem}{Theorem}
\newdef{note}{Note}

\title{Efficient Generation of Inductive Validity Cores for Safety Properties}

\numberofauthors{3}

\author{
\alignauthor
Elaheh Ghassabani\\
       \affaddr{Department of Computer Science and Engineering}\\
       \affaddr{University of Minnesota}\\
       \affaddr{\small{200 Union Street\\Minneapolis, MN, 55455, USA}}\\
       \email{\texttt{\small{ghass013@umn.edu}}}
\alignauthor
Andrew Gacek\\
       \affaddr{Rockwell Collins\\ Advanced Technology Center}\\
       \affaddr{\small{400 Collins Rd. NE\\Cedar Rapids, IA, 52498, USA}}\\
       \email{\texttt{\small{andrew.gacek@rockwellcollins.com}}}
\alignauthor
Michael W. Whalen\\
       \affaddr{Department of Computer Science and Engineering}\\
       \affaddr{University of Minnesota}\\
       \affaddr{\small{200 Union Street\\Minneapolis, MN, 55455, USA}}\\
       \email{\texttt{\small{whalen@cs.umn.edu}}}
}

\maketitle

\begin{abstract}
Symbolic model checkers can construct proofs of properties over very complex models.  However, the results reported by the tool when a proof succeeds do not generally provide much insight to the user.  It is often useful for users to have traceability information related to the proof: which portions of the model were necessary to construct it.  This traceability information can be used to diagnose a variety of modeling problems such as overconstrained axioms and underconstrained properties, and can also be used to measure {\em completeness} of a set of requirements over a model.  In this paper, we present a new algorithm to efficiently compute the {\em inductive validity core} (IVC) within a model necessary for inductive proofs of safety properties for sequential systems.  The algorithm is based on the UNSAT core support built into current SMT solvers and a novel encoding of the inductive problem to try to generate a minimal inductive validity core.  We prove our algorithm is correct, and describe its implementation in the JKind model checker for Lustre models.  We then present an experiment in which we benchmark the algorithm in terms of speed, diversity of produced cores, and minimality, with promising results.
\end{abstract}

\begin{CCSXML}
<ccs2012>
<concept>
<concept_id>10003752.10003790.10003794</concept_id>
<concept_desc>Theory of computation~Automated reasoning</concept_desc>
<concept_significance>500</concept_significance>
</concept>
<concept>
<concept_id>10003752.10003790.10011192</concept_id>
<concept_desc>Theory of computation~Verification by model checking</concept_desc>
<concept_significance>500</concept_significance>
</concept>
<concept>
<concept_id>10011007.10011074.10011075.10011076</concept_id>
<concept_desc>Software and its engineering~Requirements analysis</concept_desc>
<concept_significance>500</concept_significance>
</concept>
<concept>
<concept_id>10011007.10011074.10011099.10011692</concept_id>
<concept_desc>Software and its engineering~Formal software verification</concept_desc>
<concept_significance>500</concept_significance>
</concept>
</ccs2012>
\end{CCSXML}

\ccsdesc[500]{Theory of computation~Verification by model checking}
\ccsdesc[500]{Theory of computation~Automated reasoning}
\ccsdesc[500]{Software and its engineering~Requirements analysis}
\ccsdesc[500]{Software and its engineering~Formal software verification}
\printccsdesc
\keywords{Traceability, Requirements Completeness, $k$-Induction, IC3/PDR}

%\mike{Right now we are inconsistent with using ``inductive validity core'' vs. ``IVC''.  Should we stick with %one or the other (after defining IVC, of course)?}

\section{Introduction}
\label{sec:intro}

Symbolic model checking using induction-based techniques such as IC3/PDR~\cite{Een2011:PDR} and $k$-induction~\cite{SheeranSS00} can often determine whether safety properties hold of complex finite or infinite-state systems.  Model checking tools are attractive both because they are automated, requiring little or no interaction with the user, and if the answer to a correctness query is negative, they provide a counterexample to the satisfaction of the property.  These counterexamples can be used both to illustrate subtle errors in complex hardware and software designs~\cite{hilt2013,McMillan99:compositional, Miller10:CACM} and to support automated test case generation~\cite{Whalen13:OMCDC, You15:dse}.
In the event that a property is proved, however, it is not always clear what level of assurance should be invested in the result.  Given that these kinds of analyses are performed for safety- and security-critical software, this can lead to overconfidence in the behavior of the fielded system.  It is well known that issues such as vacuity~\cite{Kupferman03:Vacuity} can cause verification to succeed despite errors in a property specification or in the model. Even for non-vacuous specifications, it is possible to over-constrain the specification of the {\em environment} in the model such that the implementation will not work in the actual operating environment.

At issue is the level of feedback provided by the tool to the user. In
most tools, when the answer to a correctness query is positive, no
further information is provided. What we would like to provide is
traceability information, an {\em inductive validity core} (IVC), that explains
the proof, in much the same way that a counterexample explains the
negative result. This is not a new idea: UNSAT cores~\cite{zhang2003extracting}
provide the same kind of information for individual SAT or
SMT queries, and this approach has been lifted to bounded analysis
for Alloy in~\cite{Torlak08:cores}. What we propose is a generic and efficient
mechanism for extracting supporting information, similar to an UNSAT
core, from the proofs of safety properties using inductive techniques
such as PDR and $k$-induction. Because many
properties are not themselves inductive, these proof techniques
introduce lemmas as part of the solving process in order to strengthen
the properties and make them inductive. Our technique allows
efficient, accurate, and precise extraction of inductive validity cores
even in the presence of such auxiliary lemmas.

Once generated, the IVC can be used for many purposes in the software verification process, including at least the following:
\begin{description}
    \item[Vacuity detection:] The idea of syntactic vacuity detection (checking whether all subformulae within a property are necessary for its satisfaction) has been well studied~\cite{Kupferman03:Vacuity}.   However, even if a property is not syntactically vacuous, it may not require substantial portions of the model.  This in turn may indicate that either a.) the model is incorrectly constructed or b.) the property is weaker than expected.  We have seen several examples of this mis-specification in our verification work, especially when variables computed by the model are used as part of antecedents to implications.
    \item[Completeness checking:] Closely related to vacuity detection is the idea of {\em completeness checking}, e.g., are all atoms in the model necessary for at least one of the properties proven about the model?  Several different notions of completeness checking have been proposed~\cite{chockler_coverage_2003, kupferman_theory_2008}, but these are very expensive to compute, and in some cases, provide an overly strict answer (e.g., checking can only be performed on non-vacuous models for~\cite{kupferman_theory_2008}). %\ela{Mike, as you said there must be different definitions for completeness. But, the one you mentioned here as an example, I think, is not exactly what we want. I would say,} \fixed{e.g., are all atoms in the model necessary for the properties proven about the model?}  \ela{  Also, the next sentence gives somehow the impression that those definitions are very good, but expensive. Instead, ours is only cheap. It does not say ours is better. I would change it like this:  } \fixed{In~\cite{chockler_coverage_2003, kupferman_theory_2008}, several different notions of completeness checking have been proposed that are rather practical to compute, and in some cases, provide an overly strict answer (e.g., checking can only be performed on non-vacuous models for~\cite{kupferman_theory_2008}). However, as an added benefit, the IVC idea not only lowers the computational cost, but also introduces a novel way of checking completeness for free.}    \ela{BUT, maybe not SO good to claim this in this paper becuz we don't have enough supporting evidence}
% TODO: \mike{Double check this! (for kupferman_theory_2008)}
    \item[Traceability:] Certification standards for safety-critical systems (e.g.,~\cite{DO178C, MOD:00-55}) usually require {\em traceability matrices} that map high-level requirements to lower-level requirements and (eventually) leaf-level requirements to code or models.  Current traceability approaches involve either manual mappings between requirements and code/models~\cite{SimulinkTraceability} or a heuristic approach involving natural language processing~\cite{Keenan12:Tracelab}.  Both of these approaches tend to be inaccurate.  For functional properties that can be proven with inductive model checkers, inductive validity cores can provide accurate traceability matrices with no user effort.
    \item[Symbolic Simulation / Test Case Generation:]  Model checkers are now often used for symbolic simulation and structural-coverage-based test case generation~\cite{SimulinkDesignVerifier,Whalen13:OMCDC}.  For either of these purposes, the model checker is supposed to produce a witness trace for a given coverage obligation using a ``trap property'' which is expected to be falsifiable.  In systems of sufficient size, there is often ``dead code'' that cannot ever be reached.  In this case, a proof of non-reachability is produced, and the IVC provides the reason why this code is unreachable.
\end{description}
\noindent Nevertheless, to be useful for these tasks, the generation
process must be efficient and the generated IVC must be
accurate and precise (that is, sound and close to minimal).  The requirement for accuracy is obvious; otherwise the ``minimal'' set of model elements is no longer sufficient to produce a proof, so it no longer meets our IVC definition.  Minimality is important because (for traceability) we do not want unnecessary model elements in the trace matrix, and (for completeness) it may give us a false level of confidence that we have enough requirements.

%\ela{should we add this: (?)}
In addition, %\fixed{ a property can have as many unique minimal IVC sets as the possible paths through which it can be proved. Therefore,}
we are also interested in {\em diversity}:  how many different IVCs can be computed for a given property and model? Requirements engineers often talk about ``the traceability matrix'' or ``the satisfaction argument''.  If proofs are regularly diverse, then there are potentially many equally valid traceability matrices, and this may lead to changes in traceability research.

%% We put the image here so it shows up side-by-side with fig:ex-after
\begin{figure}[t]
\centering
\includegraphics[width=\columnwidth]{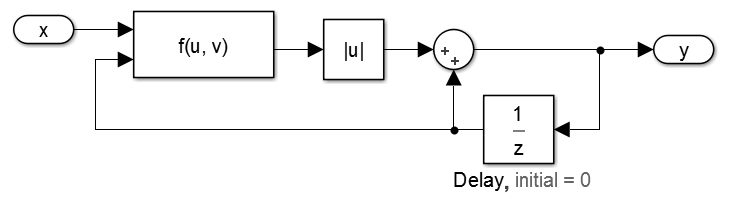}
{\smaller
\begin{verbatim}
node filter(x : real) returns (a, b, y : real);
let
  a = f(x, 0.0 -> pre y);
  b = if a >= 0.0 then a else -a;
  y = b + (0.0 -> pre y);
tel;
\end{verbatim}
}
\vspace{-0.1in}
\caption{Model with property $y \geq 0$, before IVC analysis}
\label{fig:ex-before}
\end{figure}

In the remainder of this paper, we present an algorithm for efficient generation of IVCs for induction-based model checkers.  Our contributions, as detailed in the remainder of the paper, are as follows:

\begin{itemize}
    \item We present a technique for extracting inductive validity
      cores from an inductive verification of a safety property over a sequential model involving lemmas.
    \item We formalize this technique and present an implementation of it in the JKind model checker~\cite{jkind}.
    \item We present an experiment over our implementation and measure the efficiency, minimality, and robustness of the IVC generation process.
\end{itemize}

The rest of this article is organized as follows. In
Section~\ref{sec:exmpl}, we present a motivating example. In
Section~\ref{sec:background}, we present the required background for
our approach. In Sections~\ref{sec:ivc} and~\ref{sec:impl}, we present
our approach and our implementation in JKind.
Sections~\ref{sec:experiment} and~\ref{sec:results} present an
evaluation of our approach on a set of benchmark examples. Finally,
Section~\ref{sec:related} discusses related work and
Section~\ref{sec:conc} concludes.

\iffalse
\begin{itemize}
    \item Overview of the problem: sequential model checkers do not provide much insight into proofs.
    \item Section should roughly follow the structure of "Finding Minimal Unsatisfiable Cores of
        Declarative Specifications" paper by Torlak et al (with Dan Jackson).
    \item UNSAT Cores have been used for a variety of analysis tasks
    \item we want to generalize this idea for sequential systems
\end{itemize}
\fi

%%% Local Variables:
%%% mode: latex
%%% TeX-master: "main.tex"
%%% End

%%  LocalWords:  IC PDR IVC UNSAT subformulae mis IVCs Torlak et al

%\ifdefined\TECHREPORT
%\input{description}
%\fi
%\section{Illustrative Example}

\section{Motivating Example}
\label{sec:exmpl}

\begin{figure}[t]
\includegraphics[width=\columnwidth]{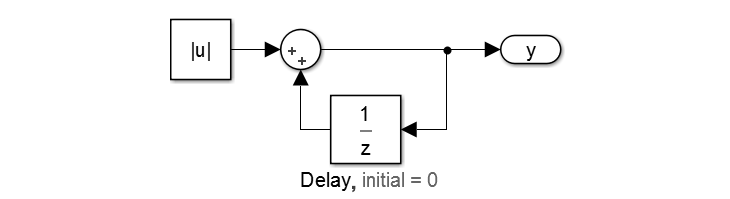}
{\smaller
\begin{verbatim}
node filter(x, a : real) returns (b, y : real);
let
  b = if a >= 0.0 then a else -a;
  y = b + (0.0 -> pre y);
tel;
\end{verbatim}
}
\vspace{-0.1in}
\caption{Model with property $y \geq 0$, after IVC analysis}
\label{fig:ex-after}
\end{figure}

Consider the model shown both graphically and textually in
Figure~\ref{fig:ex-before}. This model takes an input, combines it
with the previous output in some way, takes the absolute value, and
then adds this to an accumulating value. This model has the property
that the output is always non-negative, i.e., $y \geq 0$. Moreover, it
happens that this property holds regardless of the way the input is
combined with the previous output, i.e., the function $f$ in the
model. Formally, we say that the minimal inductive validity core (IVC) does
not contain that part of the model. The model reduced to a minimal IVC
is shown in Figure~\ref{fig:ex-after}. Note that traditional static
dependency analysis (i.e., a {\em backward static slice}) would not be able to
remove $f$ from the original model.  In our experiments in Section~\ref{sec:results},
we demonstrate that IVCs are much smaller and more precise than static slices.

%% \begin{itemize}
%%     \item Not sure if this should go before or after the background section with a description of Lustre.
%%     \item Need a small but interesting example.  Andrew, do any of the models that you use as jkind tests
%%         function in this way?  It would be nice to look at what we have lying around; we need something
%%         that requires invariants.
%%     \item It would also be good to have a few points of interest with the model-requirement pairing:
%%     \item \quad   vacuity due to an overconstrained environment
%%     \item \quad   definitions within the model that are irrelevant to the proof.
%%     \item Explain the model and the proof process.
%% \end{itemize}

%%  LocalWords:  IVC

\section{Preliminaries}
\label{sec:background}

\newcommand{\bool}[0]{\mathit{bool}}
\newcommand{\reach}[0]{\mathit{R}}
\newcommand{\ite}[3]{\mathit{if}\ {#1}\ \mathit{then}\ {#2}\ \mathit{else}\ {#3}}

%\subsection{Transition Systems and Safety Properties}

Given a state space $S$, a transition system $(I,T)$ consists of an
initial state predicate $I : S \to \bool$ and a transition step
predicate $T : S \times S \to \bool$. We define the notion of
reachability for $(I, T)$ as the smallest predicate $\reach : S \to
\bool$ which satisfies the following formulas:
\begin{gather*}
  \forall s.~ I(s) \Rightarrow \reach(s) \\
  \forall s, s'.~ \reach(s) \land T(s, s') \Rightarrow \reach(s')
\end{gather*}
A safety property $P : S \to \bool$ is a state predicate. A safety
property $P$ holds on a transition system $(I, T)$ if it holds on all
reachable states, i.e., $\forall s.~ \reach(s) \Rightarrow P(s)$,
written as $\reach \Rightarrow P$ for short. When this is the case, we
write $(I, T)\vdash P$.

For an arbitrary transition system $(I, T)$, computing reachability
can be very expensive or even impossible. Thus, we need a more
effective way of checking if a safety property $P$ is satisfied by the
system. The key idea is to over-approximate reachability. If we can
find an over-approximation that implies the property, then the
property must hold. Otherwise, the approximation needs to be refined.

A good first approximation for reachability is the property itself.
That is, we can check if the following formulas hold:
\begin{gather}
  \forall s.~ I(s) \Rightarrow P(s)
  \label{eq:1-ind-base} \\
  \forall s, s'.~ P(s) \land T(s, s') \Rightarrow P(s')
  \label{eq:1-ind-step}
\end{gather}
If both formulas hold then $P$ is {\em inductive} and holds over the
system. If (\ref{eq:1-ind-base}) fails to hold, then $P$ is violated
by an initial state of the system. If (\ref{eq:1-ind-step}) fails to
hold, then $P$ is too much of an over-approximation and needs to be
refined.

One way to refine our over-approximation is to add additional lemmas
to the property of interest. For example, given another property $L :
S \to bool$ we can consider the extended property $P'(s) = P(s) \land
L(s)$, written as $P' = P \land L$ for short. If $P'$ holds on the
system, then $P$ must hold as well. The hope is that the addition of
$L$ makes formula (\ref{eq:1-ind-step}) provable because the
antecedent is more constrained. However, the consequent of
(\ref{eq:1-ind-step}) is also more constrained, so the lemma $L$ may
require additional lemmas of its own. Finding and proving these
lemmas is the means by which property directed reachability (PDR)
strengthens and proves a safety property.

Another way to refine our over-approximation is to use use {\em
  $k$-induction} which unrolls the property over $k$ steps of the
transition system. For example, 1-induction consists of formulas
(\ref{eq:1-ind-base}) and (\ref{eq:1-ind-step}) above, whereas
2-induction consists of the following formulas:
\begin{gather*}
\forall s.~ I(s) \Rightarrow P(s) \\
\forall s, s'.~ I(s) \land T(s, s') \Rightarrow P(s') \\
\forall s, s', s''.~ P(s) \land T(s, s') \land P(s') \land T(s',
  s'') \Rightarrow P(s'')
\end{gather*}
That is, there are two base step checks and one inductive step check.
In general, for an arbitrary $k$, $k$-induction consists of $k$
base step checks and one inductive step check as shown in
Figure~\ref{fig:k-induction} (the universal quantifiers on $s_i$ have
been elided for space). We say that a property is $k$-inductive if it
satisfies the $k$-induction constraints for the given value of $k$.
The hope is that the additional formulas in the antecedent of the
inductive step make it provable.

\begin{figure}
\begin{gather*}
I(s_0) \Rightarrow P(s_0) \\[-2pt]
\vdots \\[2pt]
I(s_0) \land T(s_0, s_1) \land \cdots \land T(s_{k-2}, s_{k-1})
\Rightarrow P(s_{k-1}) \\[2pt]
P(s_0) \land T(s_0, s_1) \land \cdots \land P(s_{k-1}) \land
T(s_{k-1}, s_k) \Rightarrow P(s_k)
\end{gather*}
\caption{$k$-induction formulas: $k$ base cases and one inductive
  step}
\label{fig:k-induction}
\end{figure}

In practice, inductive model checkers often use a combination of the
above techniques. Thus, a typical conclusion is of the form ``$P$ with
lemmas $L_1, \ldots, L_n$ is $k$-inductive''.

%%% Local Variables:
%%% mode: latex
%%% TeX-master: "main.tex"
%%% End

%%  LocalWords:  bool reachability JKind Lustre PDR Yices MathSAT ok
%%  LocalWords:  SMTInterpol dataflow init

\section{Inductive Validity Cores}
\label{sec:ivc}

\newcommand{\bfalg}{IVC\_BF\xspace}
\newcommand{\ucalg}{IVC\_UC\xspace}
\newcommand{\ucbfalg}{IVC\_UCBF\xspace}
\newcommand{\bq}{\textsc{BaseQuery}\xspace}
\newcommand{\iq}{\textsc{IndQuery}\xspace}
\newcommand{\fq}{\textsc{FullQuery}\xspace}

\newcommand{\mink}{\textsc{MinimizeK}\xspace}
\newcommand{\reduceinv}{\textsc{ReduceInvariants}\xspace}
\newcommand{\minivc}{\textsc{MinimizeIvc}\xspace}

\newcommand{\checksat}{\textsc{CheckSat}\xspace}
\newcommand{\unsatcore}{\textsc{UnsatCore}\xspace}
\newcommand{\unsat}{\textsc{UNSAT}\xspace}
\newcommand{\sat}{\textsc{SAT}\xspace}

Given a transition system which satisfies a safety property $P$, we
want to know which parts of the system are necessary for satisfying
the safety property. One possible way of asking this is, ``What is the
most general version of this transition system that still satisfies
the property?'' The answer is disappointing. The most general system is
$I(s) = P(s)$ and $T(s, s') = P(s')$, i.e., you start in any state
satisfying the property and can transition to any state that still
satisfies the property. This answer gives no insight into the original
system because it has no connection to the original system. In this
section we introduce the notion of {\em inductive validity cores} (IVC)
which looks at generalizing the original transition system while
preserving a safety property.

In order to talk about generalizing a transition system, we assume the
transition relation of the system has the structure of a top-level
conjunction. This assumption gives us a structure that we can easily
manipulate as we generalize the system. Given $T(s, s') = T_1(s, s')
\land \cdots \land T_n(s, s')$ we will write $T = T_1 \land \cdots
\land T_n$ for short. By further abuse of notation we will identify
$T$ with the set of its top-level conjuncts. Thus we will write $x \in
T$ to mean that $x$ is a top-level conjunct of $T$. We will write $S
\subseteq T$ to mean that all top-level conjuncts of $S$ are top-level
conjuncts of $T$. We will write $T \setminus \{x\}$ to mean $T$
with the top-level conjunct $x$ removed. We will use the same notation
when working with sets of invariants.

\begin{definition}{\emph{Inductive Validity Core:}}
  \label{def:ivc}
  Let $(I, T)$ be a transition system and let $P$ be a
  safety property with $(I, T)\vdash P$. We say $S \subseteq
  T$ is an {\em inductive validity core} for $(I, T)\vdash P$ iff $(I,
  S) \vdash P$. When $I$, $T$, and $P$ can be inferred from
  context we will simply say $S$ is an inductive validity core.
\end{definition}

\begin{definition}{\emph{Minimal Inductive Validity Core:}}
  \label{def:minimal-ivc}
  An inductive validity core $S$ for $(I, T)\vdash P$ is minimal iff
  there does not exist $M \subset S$ such that $M$ is an inductive validity core
  for $(I, T)\vdash P$.
\end{definition}

Note that minimal inductive validity cores are not necessarily unique.
For example, take $I = a \land b$, $T = a' \land b'$, and $P = a \lor
b$. Then both $\{a'\}$ and $\{b'\}$ are minimal inductive validity
cores for $(I, T)\vdash P$. However, inductive validity cores do have
the following monotonicity property.

\begin{lemma}
  \label{lem:ivc-monotonic}
  Let $(I, T)$ be a transition system and let $P$ be a safety property
  with $(I, T)\vdash P$. Let $S_1 \subseteq S_2 \subseteq T$. If $S_1$
  is an inductive validity core for $(I, T)\vdash P$ then $S_2$ is an
  inductive validity core for $(I, T)\vdash P$.
\end{lemma}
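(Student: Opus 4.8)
The plan is to reduce the claim to a single monotonicity fact about reachability: enlarging the set of transition conjuncts can only shrink (or preserve) the set of reachable states. Concretely, suppose $S_1 \subseteq S_2 \subseteq T$ and write $S_1 = T_{i_1} \land \cdots \land T_{i_m}$ as a conjunction of some of the top-level conjuncts of $T$, and similarly $S_2$ as a conjunction including all of those plus possibly more. Then for all $s, s'$ we have $S_2(s, s') \Rightarrow S_1(s, s')$, since dropping conjuncts weakens a conjunction. The initial predicate $I$ is the same in both systems.

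First I would show $\reach_{(I, S_2)} \Rightarrow \reach_{(I, S_1)}$, i.e.\ every state reachable in $(I, S_2)$ is reachable in $(I, S_1)$. Recall from the preliminaries that $\reach_{(I, S_1)}$ is defined as the \emph{smallest} predicate closed under $I(s) \Rightarrow \reach(s)$ and $\reach(s) \land S_1(s, s') \Rightarrow \reach(s')$. So it suffices to check that $\reach_{(I, S_2)}$ is itself closed under those two rules. The base rule $I(s) \Rightarrow \reach_{(I, S_2)}(s)$ holds by definition of $\reach_{(I, S_2)}$. For the step rule, assume $\reach_{(I, S_2)}(s)$ and $S_1(s, s')$; I want $\reach_{(I, S_2)}(s')$. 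This is where I would use $S_2(s,s') \Rightarrow S_1(s,s')$ — but note the implication goes the \emph{wrong} direction for a naive argument, so the real point is the contrapositive framing: actually the clean way is to prove by induction on the derivation of $\reach_{(I, S_2)}(s)$ that $\reach_{(I,S_1)}(s)$ holds, using at the step that $S_2(s,s') \Rightarrow S_1(s,s')$ lets us promote an $S_2$-step to an $S_1$-step. That induction is the technical heart, and it is the step I expect to require the most care in getting the quantifiers and the minimality-of-$\reach$ argument exactly right.

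Given $\reach_{(I, S_2)} \Rightarrow \reach_{(I, S_1)}$, the lemma follows immediately. By hypothesis $(I, S_1) \vdash P$, i.e.\ $\reach_{(I, S_1)} \Rightarrow P$. Composing, $\reach_{(I, S_2)} \Rightarrow \reach_{(I, S_1)} \Rightarrow P$, which is exactly $(I, S_2) \vdash P$, so $S_2$ is an inductive validity core for $(I, T) \vdash P$ (it is a subset of $T$ by assumption). The main obstacle, as noted, is purely in carefully justifying the reachability-monotonicity step from the least-fixed-point characterization rather than anything conceptually deep; once that is in hand the rest is a one-line composition.
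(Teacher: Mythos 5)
Your proposal is correct and takes essentially the same route as the paper, whose entire proof is the observation that $S_1 \subseteq S_2$ gives $S_2 \Rightarrow S_1$, hence the reachable states of $(I, S_2)$ are a subset of those of $(I, S_1)$; you simply unfold the least-fixed-point justification of that monotonicity step. The only wobble is your first framing of the closure argument (checking that the $S_2$-reachability predicate is closed under the $S_1$-rules would prove the inclusion in the wrong direction), but you notice this and correctly switch to inducting on the derivation of $S_2$-reachability, promoting each $S_2$-step to an $S_1$-step.
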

\begin{proof}
  From $S_1 \subseteq S_2$ we have $S_2 \Rightarrow S_1$. Thus the
  reachable states of $(I, S_2)$ are a subset of the reachable states
  of $(I, S_1)$. \qed
\end{proof}

\begin{algorithm}[t]
  \SetKwInOut{Input}{input}
  \SetKwInOut{Output}{output}
  \Input{$(I, T)\vdash P$}
  \Output{Minimal inductive validity core for $(I, T)\vdash P$}
  \BlankLine
  $S \leftarrow T$ \\
  \For{$x \in S$} {
    \If{$(I, S\setminus\{x\}) \vdash P$}{
      $S \leftarrow S\setminus \{x\}$
    }
  }
  \Return{S}
\caption{\bfalg: Brute-force algorithm for computing a minimal IVC}
\label{alg:naive}
\end{algorithm}

This lemma gives us a simple, brute-force algorithm for computing
a minimal inductive validity core, Algorithm \bfalg~(\ref{alg:naive}). The
resulting set of this algorithm is obviously an inductive validity
core for $(I, T)\vdash P$. The following lemma shows that it is also
minimal.

\begin{lemma}
  The result of Algorithm~\ref{alg:naive} is a minimal inductive validity core
  for $(I, T)\vdash P$.
\end{lemma}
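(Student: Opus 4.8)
The plan is to show the output $S$ of Algorithm~\ref{alg:naive} satisfies Definition~\ref{def:minimal-ivc}, i.e. that no proper subset $M \subset S$ is an inductive validity core for $(I, T)\vdash P$. I will argue by contradiction: suppose such an $M$ exists. Then there is some element $x \in S \setminus M$. I want to derive a contradiction with the fact that $x$ survived the loop, that is, with the fact that at the iteration where $x$ was considered, the test $(I, S'\setminus\{x\}) \vdash P$ failed (where $S'$ denotes the value of the working set at that moment).

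First I would set up notation for the loop: let $S'$ be the value of the working set $S$ at the start of the iteration that processes $x$, and let $S$ (final) be the returned value. The key structural observation is that the working set only ever shrinks, so $S \subseteq S'$, and moreover $x \in S'$ since $x$ is still in the final set. Since $x$ was not removed, the guard must have been false: $(I, S'\setminus\{x\}) \not\vdash P$. Now I would use the monotonicity lemma (Lemma~\ref{lem:ivc-monotonic}) contrapositively: if some set is \emph{not} an IVC, then no subset of it is an IVC either. So it suffices to exhibit $M$ (or a witness set) as a subset of $S'\setminus\{x\}$. Since $M \subseteq S \setminus \{x\} \subseteq S' \setminus \{x\}$ — the first inclusion because $M \subset S$ and $x \notin M$, the second because $S \subseteq S'$ — and $M$ is assumed to be an IVC, Lemma~\ref{lem:ivc-monotonic} would force $S' \setminus \{x\}$ to be an IVC as well, contradicting the failed guard. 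Hence no such $M$ exists and $S$ is minimal. I should also note in passing (as the excerpt already remarks) that $S$ is itself an IVC, since every assignment $S \leftarrow S \setminus \{x\}$ is guarded by a successful check $(I, S\setminus\{x\})\vdash P$, and the initial value $T$ is an IVC by hypothesis.

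The main obstacle — really the only subtle point — is being careful about \emph{which} working set the failed guard refers to. Minimality is relative to the \emph{final} $S$, but the guard that rejected $x$ was evaluated against the intermediate set $S'$, which may be strictly larger than $S$. The argument goes through precisely because removals are monotone (the set only shrinks as the loop proceeds), so $M \subseteq S \setminus\{x\} \subseteq S'\setminus\{x\}$, and then Lemma~\ref{lem:ivc-monotonic} bridges from the small set $M$ up to $S'\setminus\{x\}$. Making this chain of inclusions explicit, and explicitly invoking the contrapositive of monotonicity, is the heart of the proof; everything else is bookkeeping.
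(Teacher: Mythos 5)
Your proof is correct and follows essentially the same route as the paper's: assume a strictly smaller IVC $M$ exists, pick $x \in S \setminus M$, observe that the guard for $x$ failed against some intermediate working set containing the final result, chain the inclusions $M \subseteq S\setminus\{x\} \subseteq S'\setminus\{x\}$, and apply Lemma~\ref{lem:ivc-monotonic} to contradict the failed check. The only difference is notational (the paper writes $R$ for the final result and $S$ for the intermediate set), and your explicit remark about the subtlety of which working set the guard refers to is exactly the point the paper's chain of inclusions is handling.
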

\begin{proof}
  Let the result be $R$. Suppose towards contradiction that $R$ is not
  minimal. Then there is an inductive validity core $M$ with $M
  \subset R$. Take $x \in R\setminus M$. Since $x \in R$ it must be
  that during the algorithm $(I, S\setminus\{x\})\vdash P$ is not true
  for some set $S$ where $R \subseteq S$. We have $M \subset R
  \subseteq S$ and $x\not\in M$, thus $M \subseteq S\setminus \{x\}$.
  Since $M$ is an inductive validity core,
  Lemma~\ref{lem:ivc-monotonic} says that $S\setminus \{x\}$ is an
  inductive validity core, and so $(I, S\setminus\{x\})\vdash P$. This
  is a contradiction, thus $R$ must be minimal.
\end{proof}

This algorithm has two problems. First, checking if a safety property
holds is undecidable in general thus the algorithm may never terminate
even when the safety property is easily provable over the original
transition system. Second, this algorithm is very inefficient since it
tries to re-prove the property multiple times.

\begin{algorithm}[t]
  \SetKwInOut{Input}{input}
  \SetKwInOut{Output}{output}
  \Input{$P$ with invariants $Q$ is $k$-inductive for $(I, T)$}
  \Output{Inductive validity core for $(I, T)\vdash P$}
  \BlankLine
  $k \leftarrow \mink(T, P \land Q)$ \\
  $R \leftarrow \reduceinv_k(T, Q, P)$ \\
  \Return{$\minivc_k(I, T, R)$}\\
\caption{\ucalg: Efficient algorithm for computing a nearly minimal inductive validity core from UNSAT cores}
\label{alg:ivc}
\end{algorithm}

The key to a more efficient algorithm is to make better use of the
information that comes out of model checking. In addition to knowing
that $P$ holds on a system $(I, T)$, suppose we also know something
stronger: $P$ with the invariant set $Q$ is $k$-inductive for $(I,
T)$. This gives us the broad structure of a proof for $P$ which allows
us to reconstruct the proof over a modified transition system.
However, we must be careful since this proof structure may be more
than is actually needed to establish $P$. In particular, $Q$ may
contain unneeded invariants which could cause the inductive validity
core for $P \land Q$ to be larger than the inductive validity core for
$P$. Thus before computing the inductive validity core we first try to
reduce the set of invariants to be as small as possible. This
operation is expensive when $k$ is large so as a first step we
minimize $k$. This is the motivation behind
Algorithm \ucalg~(\ref{alg:ivc}).

\begin{figure}
\begin{align*}
  &\bq_1(I, T, P) \equiv \forall s_0.~ I(s_0) \Rightarrow P(s_0) \\
%%%
  &\bq_{k+1}(I, T, P) \equiv \bq_k(I, T, P) \land~ \\
  &\hspace{10pt}\left(\forall s_0, \ldots, s_k.~ I(s_0) \land T(s_0,
  s_1) \land \cdots \land T(s_{k-1}, s_k) \Rightarrow P(s_k)\right)
  \\[5pt]
%%%
  &\iq_k(T, Q, P) \equiv (\forall s_0, \ldots, s_k.~\\
  &\hspace{10pt} Q(s_0) \land T(s_0,
  s_1) \land \cdots \land Q(s_{k-1}) \land T(s_{k-1}, s_k) \Rightarrow
  P(s_k)) \\[5pt]
%%%
  &\fq_k(I, T, P) \equiv \\
  &\hspace{10pt}\bq_k(I, T, P) \land \iq_k(T, P, P)
\end{align*}
\caption{$k$-induction queries}
\label{fig:queries}
\end{figure}

To describe the details of Algorithm~\ref{alg:ivc} we define queries
for the base and inductive steps of $k$-induction
(Figure~\ref{fig:queries}). Note, in $\iq(T, Q, P)$ we separate the
assumptions made on each step, $Q$, from the property we try to show
on the last step, $P$. We use this separation when reducing the set of
invariants.

We assume that our queries are checked by an SMT solver. That is, we
assume we have a function $\checksat(F)$ which determines if $F$, an
existentially quantified formula, is satisfiable or not. In order to
efficiently manipulate our queries, we assume the ability to create
{\em activation literals} which are simply distinguished Boolean
variables. The call $\checksat(A, F)$ holds the activation literals in
$A$ true while checking $F$. When $F$ is unsatisfiable, we assume we
have a function $\unsatcore()$ which returns a minimal subset of the
activation literals such that the formula is unsatisfiable with those
activation literals held true. In practice, SMT solvers often return a
non-minimal set, but we can minimize the set via repeated calls to
\checksat. We assume both \checksat and \unsatcore are always
terminating.

\begin{algorithm}[t]
  $k' \leftarrow 1$ \\
  \While{$\checksat(\neg\iq_{k'}(T, P, P)) = \sat$} {
    $k' \leftarrow k' + 1$ \\
    }
  \Return{$k'$} \\
\caption{$\mink(T, P)$}
\label{alg:minimize-k}
\end{algorithm}

The function $\mink(T, P)$ is defined in
Algorithm~\ref{alg:minimize-k}. This function assumes that $P$ is
$k$-inductive for $(I, T)$. It returns the smallest $k'$ such that $P$
is $k'$-inductive for $(I, T)$. We start checking at $k' = 1$ since
smaller values of $k'$ are much quicker to check than larger ones. The
checking must eventually terminate since $P$ is $k$-inductive. We also
only check the inductive query since we know the base query will be
true for all $k' \leq k$. Although we describe each query in
Algorithm~\ref{alg:minimize-k} separately, in practice they can be
done incrementally to improve efficiency.

\begin{algorithm}[t]
  $R \leftarrow \{P\}$ \\
  Create activation literals $A = \{a_1, \ldots, a_n\}$ \\
  $C \leftarrow (a_1 \Rightarrow Q_1) \land \cdots \land (a_n \Rightarrow Q_n)$ \\
  \While{$true$} {
    $\checksat(A, \neg\iq_k(T, C, R))$ \\
    \If{$\unsatcore() = \emptyset$}{
      \Return{R}
    }
    \For{$a_i \in \unsatcore()$}{
      $R \leftarrow R \cup \{Q_i\}$ \\
      $C \leftarrow C \setminus \{a_i \Rightarrow Q_i\}$ \\
    }
  }
\caption{$\reduceinv_k(T, \{Q_1, \ldots, Q_n\}, P)$}
\label{alg:reduce-invariants}
\end{algorithm}

The function $\reduceinv_k(T, \{Q_1, \ldots, Q_n\}, P)$ is defined in
Algorithm~\ref{alg:reduce-invariants}. This function assumes that $P
\land Q_1 \land \cdots \land Q_n$ is $k$-inductive for $(I, T)$. It
returns a set $R \subseteq \{P, Q_1, \ldots, Q_n\}$ such that $R$ is
$k$-inductive for $(I, T)$ and $P \in R$. Like \mink, this function
only checks the inductive query since each element of $R$ is an
invariant and therefore will always pass the base query. A significant
complication for reducing invariants is that some invariants may
mutually need each other, even though none of them are needed to prove
$P$. Thus in Algorithm~\ref{alg:reduce-invariants} we find a minimal
set of invariants needed to prove $P$, then we find a minimal set of
invariants to prove those invariants, and so on. We terminate when no
more invariants are needed to prove the properties in $R$.
Algorithm~\ref{alg:reduce-invariants} is guaranteed to terminate since
$R$ gets larger in every iteration of the outer loop and it is bounded
above by $\{P, Q_1, \ldots, Q_n\}$. As with
Algorithm~\ref{alg:minimize-k}, we describe each query in
Algorithm~\ref{alg:reduce-invariants} separately, though in practice
large parts of the queries can be re-used to improve efficiency.

This iterative lemma determination does not guarantee a minimal
result. For example, we may find $P$ requires just $Q_1$, that $Q_1$
requires just $Q_2$, and that $Q_2$ does not require any other
invariants. This gives the result $\{P, Q_1, Q_2\}$, but it may be
that $Q_2$ alone is enough to prove $P$ thus the original result is
not minimal. Also note, we do not care about the result of \checksat,
only the \unsatcore that comes out of it. Since $P \land Q_1 \land
\cdots \land Q_n$ is $k$-inductive, we know the \checksat call will
always return \unsat.

\begin{algorithm}[t]
  Create activation literals $A = \{a_1, \ldots, a_n\}$ \\
  $T \leftarrow (a_1 \Rightarrow T_1) \land \cdots \land (a_n \Rightarrow T_n)$ \\
  $\checksat(A, \neg\fq_k(I, T, P))$ \\
  $R \leftarrow \emptyset$ \\
  \For{$a_i \in \unsatcore()$}{
    $R \leftarrow R \cup \{T_i\}$
  }
  \Return{R}
\caption{$\minivc_k(I, \{T_1, \ldots, T_n\}, P)$}
\label{alg:minimize-ivc}
\end{algorithm}

The function $\minivc_k(I, \{T_1, \ldots, T_n\}, P)$ is defined in
Algorithm~\ref{alg:minimize-ivc}. This function assumes that $P$ is
$k$-inductive for $(I, T)$. It returns a minimal inductive validity
core $R \subseteq \{T_1, \ldots, T_n\}$ such that $P$ is $k$-inductive
for $(I, R)$. It is trivially terminating. Since
Algorithms~\ref{alg:minimize-k}, \ref{alg:reduce-invariants}, and
\ref{alg:minimize-ivc} are terminating, Algorithm~\ref{alg:ivc} is
always terminating.

Our full inductive validity core algorithm in Algorithm~\ref{alg:ivc}
does not guarantee a minimal inductive validity core. One reason is
that \reduceinv does not guarantee a minimal set of invariants. A
larger reason is that we only consider the invariants that the
algorithm is given at the outset. It is possible that there are other
invariants which could lead to a smaller inductive validity core, but
we do not search for them. In Sections~\ref{sec:experiment} and
\ref{sec:results}, we show that in practice our algorithm is nearly
minimal and much more efficient than the naive algorithm. The
following theorem shows that minimality checking is at least as hard
as model checking and therefore undecidable in many settings.

\begin{theorem}
\label{thm:minimal-hard}
Determining if an IVC is minimal is as hard as model checking.
\end{theorem}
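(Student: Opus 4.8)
The plan is to reduce model checking to IVC-minimality checking. Given an arbitrary instance $(I_0,T_0,P_0)$ of the question ``does $(I_0,T_0)\vdash P_0$?'' (over a state space we call $S_0$), I would construct, syntactically, a transition system $(I',T')$, a property $P'$ with $(I',T')\vdash P'$, and a candidate core $S$ such that $S$ is a \emph{minimal} IVC for $(I',T')\vdash P'$ if and only if $(I_0,T_0)\not\vdash P_0$. A decision procedure for minimality would then decide the complement of model checking, and therefore --- since a language is decidable exactly when its complement is --- model checking itself; as model checking of safety properties over infinite-state systems, e.g.\ Lustre programs, is undecidable, so is minimality checking. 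First I would dispatch the degenerate cases: we may assume $I_0$ is satisfiable and $I_0\Rightarrow P_0$, since otherwise the model-checking answer is immediate ($\kfalse$ when $I_0$ is satisfiable but $I_0\not\Rightarrow P_0$, as an initial state then witnesses a violation; $\ktrue$ when $I_0$ is unsatisfiable) and the reduction outputs a fixed small instance of the matching minimality status --- for instance $I=(z=0)$, $T=\{z'=0\}$, $P=(z=0)$, $S=T$, which is a minimal IVC because its only proper subset $\emptyset$ is not an IVC, together with an analogous instance that makes $I$ unsatisfiable so that $\emptyset$ \emph{is} an IVC and $S$ is not minimal.

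For the main case, I would add a fresh Boolean ``monitor'' variable $g$, work over the state space $S_0\times\bool$, and take
\begin{gather*}
  I'((s,g)) = I_0(s)\land g, \qquad P'((s,g)) = g, \\
  x \;=\; T_0(s,s') \land \bigl(g' \Leftrightarrow (g\land P_0(s'))\bigr), \qquad y \;=\; P_0(s'), \\
  T' \;=\; x\land y, \qquad S \;=\; T' \;=\; \{x,y\}.
\end{gather*}
Then I would check four facts, each by a short direct argument. (1)~$(I',T')\vdash P'$: both conjuncts together force $P_0(s')$ on every transition, so $g'=g$, and $g$ is true in every initial state, hence $P'$ (which is $g$) holds on all reachable states; in particular the produced tuple is a legitimate input to the minimality problem and $S$ genuinely is an IVC. (2)~$(I',\{y\})\not\vdash P'$: $y$ does not constrain $g'$ at all, so from an initial state one may step to a $P_0$-state paired with $g=\kfalse$ (such a state exists because $I_0$ is satisfiable and $I_0\Rightarrow P_0$). (3)~$(I',\emptyset)\not\vdash P'$: with no transition constraint every state is reachable, including ones with $g=\kfalse$. (4)~$(I',\{x\})\vdash P' \iff (I_0,T_0)\vdash P_0$: along any $(I',x)$-run the value of $g$ is the conjunction of $P_0$ over the states visited so far, with the initial state covered automatically via $I_0\Rightarrow P_0$, so $P'$ is violated at some reachable state exactly when some $T_0$-run from $I_0$ leaves $P_0$.

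Assembling these with the monotonicity property (Lemma~\ref{lem:ivc-monotonic}): the proper subsets of $S=\{x,y\}$ are $\{x\}$, $\{y\}$, and $\emptyset$; facts (2) and (3) show neither $\{y\}$ nor $\emptyset$ is an IVC, so $S$ is a minimal IVC iff $\{x\}$ is not an IVC, i.e.\ by (4) iff $(I_0,T_0)\not\vdash P_0$, which is what the reduction needs. The step I expect to be the main obstacle is the joint design of the conjunct $y$ and the monitor bit $g$ so that (1) and (2) hold simultaneously: we need $x\land y$ to \emph{provably} re-establish $P'$ without smuggling a model-checking query into that argument, while $y$ on its own does not --- and the naive attempts (such as using $P_0(s')$ as an ordinary invariant on an unmonitored state) are already self-sustaining IVCs, which would make $S$ non-minimal unconditionally and collapse the reduction. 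The gadget escapes this through the asymmetry that $y$ constrains only the ``real'' component $s'$ while $g$ is updated solely by $x$, so $\{y\}$ alone permits the monitor to be corrupted in a single step whereas $x\land y$ keeps it faithful; the rest --- the degenerate-case bookkeeping and the routine induction behind fact (4) --- is straightforward.
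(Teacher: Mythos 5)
Your proof is correct and follows essentially the same strategy as the paper's: reduce model checking to minimality checking by constructing a two-conjunct transition system whose candidate IVC is minimal iff one designated singleton conjunct fails to be an IVC, which in turn holds iff the original problem is valid. The gadgets differ only cosmetically --- you use a monitor bit $g$ with property $g$ under the side conditions that $I_0$ is satisfiable and $I_0 \Rightarrow P_0$, where the paper uses the implication chain $x' \Rightarrow y' \Rightarrow P'$ with property $x \Rightarrow P$ under the side condition that $P$ is not a tautology --- but the overall argument is the same.
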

\begin{proof}
Consider an arbitrary model checking problem $(I, T)\vdash^? P$ where
$P$ is not a tautology. We will construct an IVC for a related model
checking problem which will be minimal if and only if $(I, T)\nvdash
P$. Let $x$ and $y$ be fresh variables. Construct a transition system
with initial predicate $I\land \neg x$ and transition predicate $(x'
\Rightarrow y') \land ((y' \Rightarrow P') \land T)$. The constructed
system clearly satisfies the property $x \Rightarrow P$. Thus $S = \{x'
\Rightarrow y', (y' \Rightarrow P') \land T\}$ is an IVC. $S$ is
minimal if and only if neither $\{x' \Rightarrow y'\}$ nor $\{(y'
\Rightarrow P') \land T\}$ is an IVC. Since $x$ and $y$ are fresh and
$P$ is not a tautology, $\{x' \Rightarrow y'\}$ is not an IVC. Since
$x$ and $y$ are fresh, $\{(y' \Rightarrow P') \land T\}$ is an IVC for
the property $x \Rightarrow P$ if and only if $(I, T)\vdash P$.
Therefore, $S$ is minimal if and only if $(I, T)\nvdash P$.
\end{proof}

When minimality is a necessity, we can combine \bfalg and \ucalg into
a single algorithm which aims to efficiently guarantee minimality. The
hybrid algorithm, \ucbfalg, consists of running \ucalg to generate an
initial nearly minimal IVC which is then run through \bfalg to
guarantee minimality. The resulting algorithm is not guaranteed to
terminate since \bfalg is not guaranteed to terminate.

%%% Local Variables:
%%% mode: latex
%%% TeX-master: "main.tex"
%%% End

%%  LocalWords:  Lustre iff TODO invariants Minimality BaseQuery IVC
%%  LocalWords:  InductiveQuery FullQuery MinimizeK ReduceInvariants
%%  LocalWords:  MinimizeIVC CheckSat UnsatCore UNSAT UC UCBF
%%  LocalWords:  IndQuery MinimizeIvc minimality

\section{Implementation}
\label{sec:impl}

We have implemented the inductive validity core algorithms in the
previous section in two tools: {\em JKind}, which performs the \ucalg
algorithm, and {\em JSupport}, which can compute either the \bfalg or
the \ucbfalg algorithm (using JKind as a subprocess). Moreover, our
implementation of \ucbfalg uses an additional feature of JKind to
store and re-use discovered invariants between separate runs. This
reduces some of the cost of attempting to re-prove a property multiple
times. These tools operate over the Lustre
language~\cite{Halbwachs91:lustre}, which we briefly illustrate below.

\vspace{0.1in}

\subsection{Lustre and IVCs}

Lustre~\cite{Halbwachs91:lustre} is a synchronous dataflow language
used as an input language for various model checkers. The textual
models in Figures~\ref{fig:ex-before} and \ref{fig:ex-after} are
written in Lustre. We will use model in Figure~\ref{fig:ex-before} as
a running example in this section. For our purposes, a Lustre program
consists of 1) input variables, {\tt x} in the example, 2) output
variables, {\tt a}, {\tt b}, and {\tt y} in the example, and 3) an
equation for each output variable. A Lustre program runs over discrete
time steps. On each step, the input variables take on some values and
are used to compute values for the output variables on the same step.
In addition, equations may refer to the previous value of a variable
using the {\tt pre} operator. This operator is underspecified in the
first step, so the arrow operator, {\tt ->}, is used to guard the
{\tt pre} operator. In the first step the expression {\tt e1 -> e2}
evaluates to {\tt e1}, and it evaluates to {\tt e2} in all other steps.

We interpret a Lustre program as a model specification by considering
the behavior of the program under all possible input traces. Safety
properties over Lustre can then be expressed as Boolean expressions in
Lustre. A safety property holds if the corresponding expression is
always true for all input traces. For example, the property for
Figure~\ref{fig:ex-before} is {\tt y >= 0}, which is a valid property.

It is straightforward to translate this interpretation of Lustre into
the traditional initial and transition relations. We will show this by
continuing with the example in Figure~\ref{fig:ex-before}. First we
introduce a new Boolean variable $init$ into the state space to denote
when the system is in its initial state, the state of the system prior
to initialization. In the initial state, all other variables are
completely unconstrained which models the underspecification of the pre
operator during the first step. Then we define,
\begin{align*}
  &I((x, a, b, y, \mathit{init})) = \mathit{init} \\
  &T((x, a, b, y, \mathit{init}), (x', a', b', y', \mathit{init'})) = \\
  &\hspace{1.5cm} (a' = f(x', \ite{init}{0}{y})) \land~ \\
  &\hspace{1.5cm} (b' = \ite{a' \geq 0}{a'}{-a'}) \land~ \\
  &\hspace{1.5cm} (y' = b' + (\ite{init}{0}{y})) \land ~\\
  &\hspace{1.5cm} \neg\mathit{init'}
\end{align*}
Note that $f$ is unspecified in Figure~\ref{fig:ex-before} and so also
in $T$. In a real system, $f$ would be defined in the Lustre model and
expanded in $T$. A safety property such as {\tt y >= 0} is translated
into $\mathit{init} \lor (y \geq 0)$. Nested uses of arrow and pre
operators are handled by introducing new output variables for nested
expressions, though such details are unimportant for our purposes.

Each equation in the Lustre program is translated into a single
top-level conjunct in the transition relation. This is very convenient
as the IVC of a Lustre property can be reported in terms of the output
variables whose equations are part of the IVC. Equivalently, the
interpretation of an IVC for a Lustre property is that any output
variable that is not part of the IVC can be turned into an input
variable, its equation thrown away, while preserving the validity of
the property. Thus the granularity of the IVC analysis is determined
by the granularity of the Lustre equations and can be adjusted by
introducing auxiliary variables for subexpressions if desired.

\subsection{JKind}

JKind~\cite{jkind} is an infinite-state model checker for safety
properties. JKind proves safety properties using multiple cooperative
engines in parallel including $k$-induction~\cite{SheeranSS00},
property directed reachability~\cite{Een2011:PDR}, and template-based
lemma generation~\cite{Kahsai2011}. JKind accepts Lustre programs
written over the theory of linear integer and real arithmetic. In the
back-end, JKind uses an SMT solver such as Z3~\cite{DeMoura08:z3},
Yices~\cite{Dutertre06:yices}, MathSAT~\cite{Cimatti2013:MathSAT}, or
SMTInterpol~\cite{Christ2012:SMTInterpol}.

JKind works on multiple properties simultaneously. When a property is
proven and IVC generation is enabled, an additional parallel engine
executes Algorithm~\ref{alg:ivc} to generate a nearly minimal IVC.

JKind accepts an annotation on its input Lustre program indicating
which outputs variables to consider for IVC generation. Output
variables not mentioned in the annotation are implicitly included in
all IVCs. This allows the implementation to focus on the variables
important to the user and ignore, for example, administrative
equations. This is even more important for tools which generate Lustre
as they often create many such administrative equations which simply
wire together more interesting expressions.

%%  LocalWords:  JSupport subprocess IVCs dataflow underspecified IVC
%%  LocalWords:  init Yices MathSAT SMTInterpol subexpressions
%%  LocalWords:  underspecification

\section{Experiment}
\label{sec:experiment}

%\mike{What do we want to call our efficient algorithm: IVC?}

We would like to investigate both the {\em efficiency} and {\em
  minimality} of our three algorithms: the naive brute-force
algorithm (\bfalg), the UNSAT core-based algorithm (\ucalg), and the
combined UNSAT core followed by brute-force minimization algorithm
(\ucbfalg). Efficiency is computed in terms of wall-clock time: how
much overhead does the IVC algorithm introduce? Minimality is
determined by the size of the IVC: cores with a smaller number of
variables are preferred to cores with a larger number of variables.
Finally, we are interested in the {\em diversity} of solutions: how
often do different tools/algorithms generate different minimal IVCs?

The use of JKind allows additional dimensions to our investigation: it supports two different inductive algorithms: $k$-induction and PDR, and a ``fastest'' mode, that runs both algorithms in parallel.  In addition, JKind supports multiple back-end SMT solvers including Z3~\cite{DeMoura08:z3}, Yices~\cite{Dutertre06:yices}, MathSAT~\cite{Cimatti2013:MathSAT}, and SMTInterpol~\cite{Christ2012:SMTInterpol}.  We would like to determine whether the choice of inductive algorithm affects the size of the IVC, whether different solvers are more or less efficient at producing IVCs, and whether running different solvers/algorithms leads to {\em diversity} of IVC solutions.

Therefore, we investigate the following research questions:
\begin{itemize}
    \item \textbf{RQ1:} How expensive is it to compute inductive validity cores using the \bfalg, \ucalg, and \ucbfalg algorithms?
    \item \textbf{RQ2:} How close to minimal are the IVC sets computed by \ucalg as opposed to the (guaranteed minimal) \ucbfalg?  How do the sizes of IVCs compare to static slices of the model?
    \item \textbf{RQ3:} How much {\em diversity} exists in the solutions produced by different solver/induction algorithm configurations?
\end{itemize}

\subsection{Experimental Setup}
In this study, we started from a suite of 700 Lustre models developed
as a benchmark suite for~\cite{Hagen08:FMCAD}. We augmented this suite
with 81 additional models from recent verification projects including
avionics and medical devices~\cite{QFCS15:backes,hilt2013}. Most of
the benchmark models from~\cite{Hagen08:FMCAD} are small (10kB or less,
with 6-40 equations) and contain a range of hardware benchmarks and
software problems involving counters. The additional models are much
larger: around 80kB with over 300 equations. We added the new
benchmarks to better check the scalability for the tools, especially
with respect to the brute force algorithm.
%
%\mike{MORE HERE...stats on size, reasons for add'l models.}
Each benchmark model has a single property to analyze.  For our purposes, we are only interested in models with a {\em valid} property (though it is perhaps worth noting that there is no additional computation---and thus no overhead---using the JKind IVC options for {\em invalid} properties).  In our benchmark set, 295 models yield counterexamples, and 10 additional models are neither provable nor yield counterexamples in our test configuration (see next paragraph for configuration information).  The benchmark suite therefore contains 476 models with valid properties, which we use as our test subjects.

For each test model, we computed \ucalg in 12+1 configurations: the
twelve configurations were the cross product of all solvers \{Z3,
Yices, MathSAT, SMTInterpol\} and inductive algorithms
\{$k$-induction, PDR, fastest\}, and the remaining (+1) configuration
was an instance of \bfalg run on Yices, which is the default solver in
JKind. In addition, for each of the 12 configurations, we ran an
instance of JKind without IVC to examine overhead. The experiments
were run on an Intel(R) i5-2430M, 2.40GHz, 4GB memory machine, with a
1 hour timeout for each analysis on any model. The data gathered for
each configuration of each model included the time required to check
the model without IVC, with IVC, and also the set of elements in the
computed IVC.\footnote{The benchmarks, all raw experimental results,
  and computed data are available on \cite{expr}.}

Note that not all analysis problems were solvable with all algorithms: for all solvers, $k$-induction (without IVC) was unable to solve 172 of the examples.  When comparing minimality of different solving algorithms, we only considered cases where both algorithms provided a solution (as will be discussed in more detail in Section~\ref{sec:minimality}).

\iffalse
\begin{itemize}
    \item an algorithm to compute a truly minimal set of support, i.e. \texttt{JSupport}.
    \item given a LUS model, a static crawler which automatically marks all equations of a node in the initial support set of a property.
    \item some trackers that measure the verification time with/ without support computation.
   % \item some minor changes in the XML writers.
\end{itemize}

\mike{My thoughts on this section: mostly, it needs more structure: more information on the properties of the models: size, provenance, etc., a broken out subsection on the description of the experimental setup, etc}

\mike{I think we want to split out the results in another top-level section}

Experiment:
\begin{itemize}
    \item (Overview) describe research questions and goals.
    \item Experimental setup: tell me about the models: how many, how big are they?  Then, tell me about the experiment: the tool configurations, the machine used for test.
    \item Data generation: Describe what you measured for each model analysis.
\end{itemize}
\fi

%%  LocalWords:  minimality ive UNSAT IVC Minimality IVCs PDR Yices
%%  LocalWords:  MathSAT SMTInterpol RQ JSupport

\section{Results}
\label{sec:results}

\newcommand{\takeaway}[1]{
\vspace{6pt}
\noindent\fbox{\parbox{0.975\columnwidth}{#1}}
\vspace{6pt}
}

In this section, we examine our experimental results from three perspectives: performance, minimality of \ucalg results, and diversity.

\iffalse
\mike{Results and Discussion:}
\begin{itemize}
    \item (Overview) Give an overview of results, possibly referencing a graph or two.
    \item Statistical analysis: provide positive and null hypotheses for the research questions.
    \item Evaluation of research questions: statistical results and explanations of the graphs.  One subsection per research question?
    \item Threats to validity: what are our threats?
    Internal validity: not really necessary, I think.
    External validity (how much can the results be generalized):
        1. currently all models are *very small*;
        2. Many programs were drawn from mutations of a relatively small number of ``seed'' programs;
        3. The models are written in Lustre rather than FOL.  This means that the
            top-level conjunctions are all over equations rather than general
            form;
        4. Others?!?
    Construct validity: we are measuring what we think we're measuring: IVC and minimality are reasonably defined.  For discussions of ``completeness'' and ``traceability'' we need to be clear about any claims (probably not in this paper).
\end{itemize}
\fi

\subsection{Performance}
\label{sec:performance}

In this subsection, we examine the performance of our inductive validity core algorithms (research question \textbf{RQ1}).  First we examine the performance overhead of the \ucalg algorithm over the time necessary to find a proof using inductive model checking.  To examine this question, we use the default {\em fastest} option of JKind which terminates when either the $k$-induction or PDR algorithm finds a proof.  To measure the performance overhead of the \ucalg algorithm, we execute it over the proof generated by the {\em fastest} option.

Since the \ucalg algorithm uses the UNSAT core facilities of the
underlying SMT solver, the performance is dependent on the efficiency
of this part of the solver. Looking at Tables~\ref{tab:runtime-ucalg}
and~\ref{tab:overhead-ucalg}, it is possible to examine both the
computation time for analysis using the four solvers under evaluation
and the overhead imposed by the \ucalg algorithm.
Figure~\ref{fig:performance} allows a visualization of the runtime for
the \ucalg algorithm running different solvers. The data suggests that
Yices (the default solver in JKind) and Z3 are the most performant
solvers both in terms of computation time and overhead.

%Figure~\ref{fig:runtimez3} allows a visualization of the overhead for the Z3 solver, where the models are ranked on the x axis in terms of their analysis time without performing \ucalg, and the y-axis describes analysis time with and without \ucalg.

%\mike{Add the raw timings for each solver for proof and proof + \ucalg analysis in Table~\ref{tab:overhead}.}

%Although it is relatively obvious from Table~\ref{tab:overhead}, it is straightforward to demonstrate with statistical significance that Z3 outperforms other solvers.  The hypotheses are as follows: \mike{FILL IN HYPOTHESES}.

%\mike{Do we need hypotheses here?  We could say a 'moderate' performance penalty is under 50\% over the regular solver time}.

\takeaway{The \ucalg algorithm using the Z3 and Yices SMT solvers adds a modest performance penalty to the time required for inductive proofs.}

Next, we consider the overhead of \ucalg vs.\ \bfalg.  Recall from Section~\ref{sec:ivc} that \bfalg requires $n$ model checking runs, where $n$ is the number of conjuncts in the transition relation. As expected, the performance is approximately a linear multiple of the size of the model, so larger models yield substantially lower performance.\footnote{for Lustre models, the number of conjuncts is equivalent to the number of equations in the Lustre model.}  We run the brute-force algorithm using Yices as it is the default solver for JKind and is close to Z3 in terms of computation time.  For 19 models, \bfalg times out after 1 hour.   Figure~\ref{fig:runtimeall} shows the overhead of \bfalg in comparison to \ucalg with multiple solvers.

\takeaway{The brute-force algorithm \bfalg adds a substantial performance penalty to inductive proofs in all cases and is not scalable enough to compute a minimal core for large analysis problems.}

Finally, we consider the combined \ucbfalg algorithm, in which we
first run the \ucalg to determine a close-to-minimal IVC, then run
\bfalg on the remaining set. The overhead of this algorithm is
considered in Tables~\ref{tab:runtime-ucbfalg}
and~\ref{tab:overhead-ucbfalg}. While considerably slower than \ucalg,
this approach can still be used for reasonably sized models.

%\vspace{-0.5in}
\begin{figure*}
  \centering
  \includegraphics[width=0.75\textwidth]{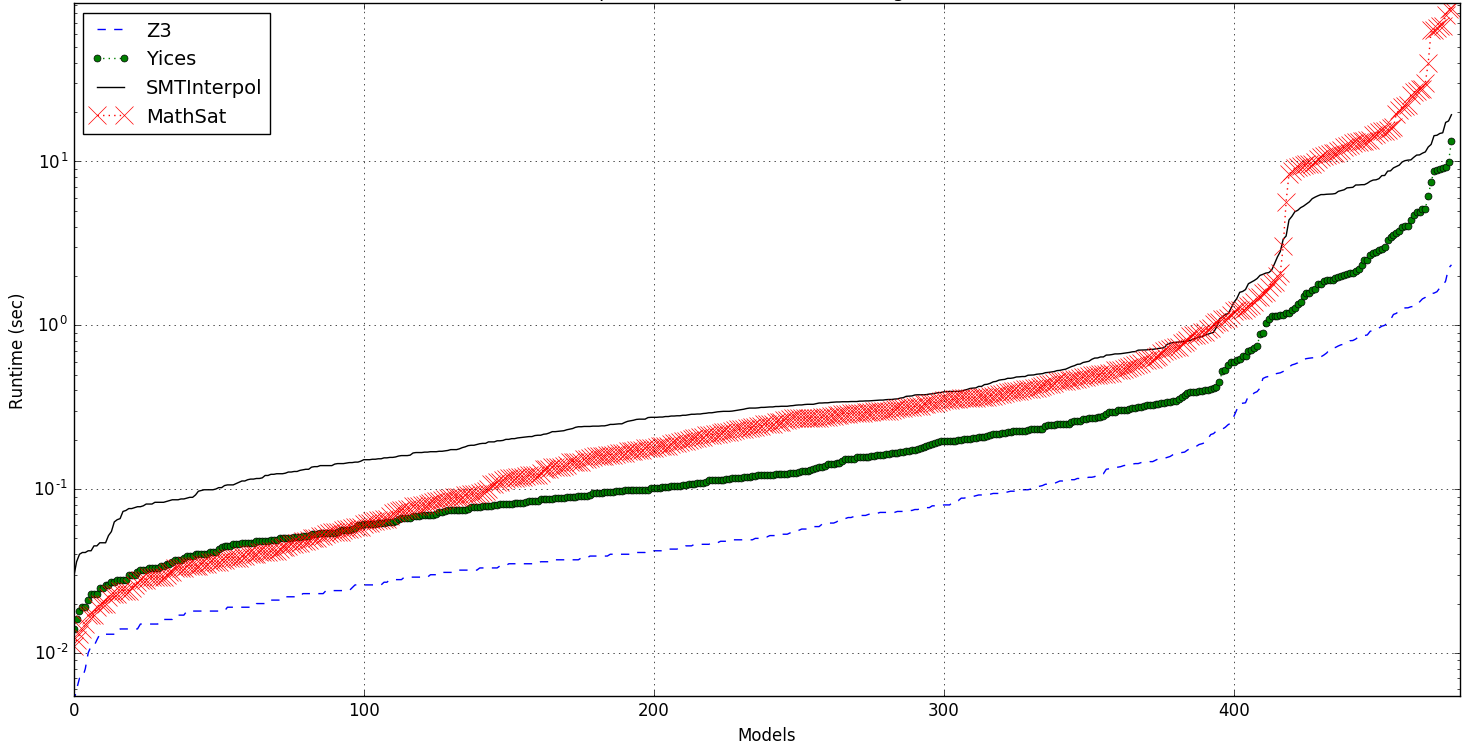}
  \label{fig:performance}
  \vspace{-0.1in}
  \caption{\ucalg performance on different solvers}
\end{figure*}

\begin{figure*}
  \centering
  \includegraphics[width=0.75\textwidth]{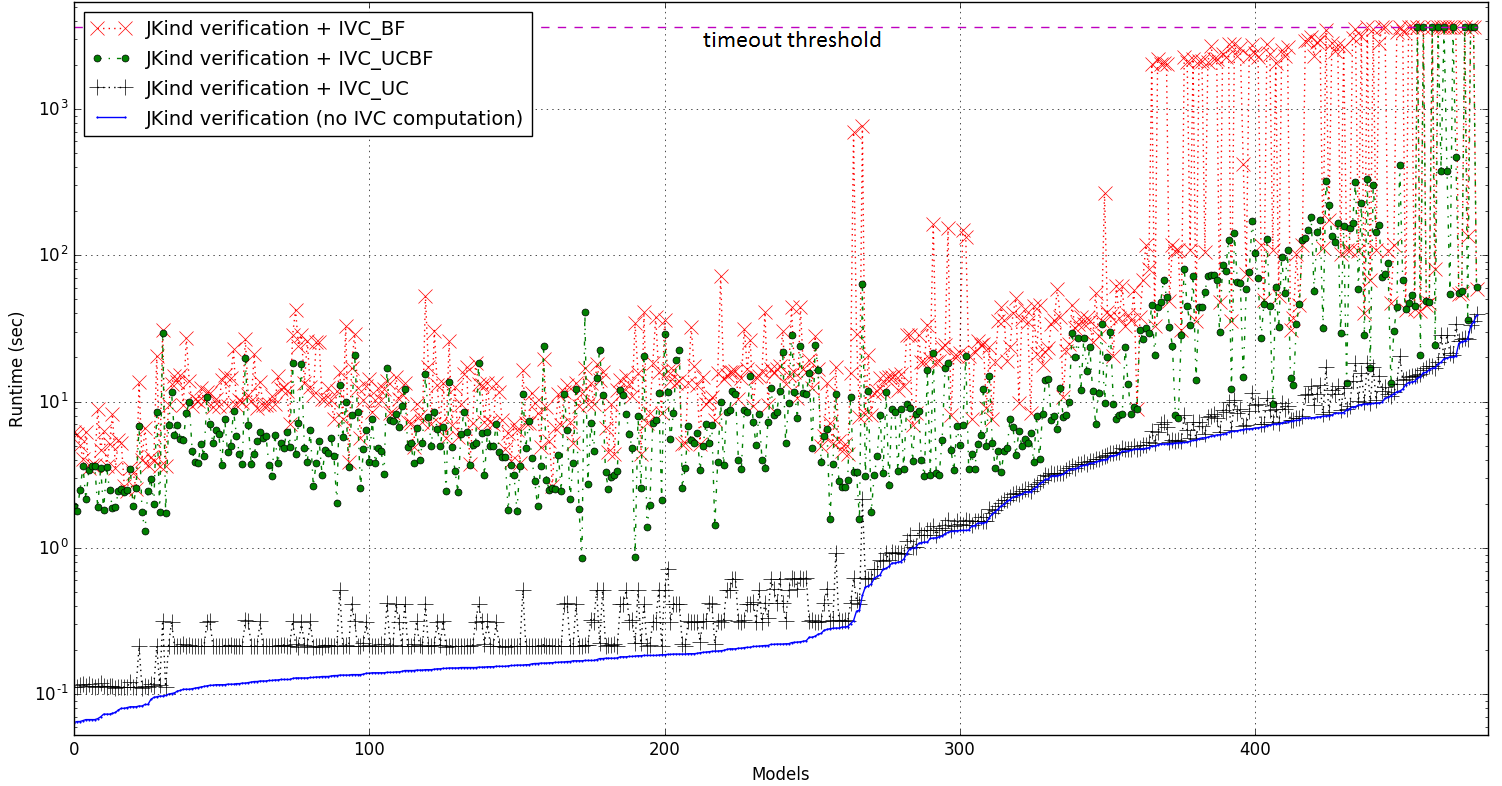}
  \vspace{-0.1in}
  \caption{Runtime of \bfalg, \ucbfalg, \ucalg algorithms for Yices}\label{fig:runtimeall}
\end{figure*}

\begin{table}
  \caption{\ucalg runtime with different solvers}
  \centering
  \begin{tabular}{ |c||c|c|c|c| }
    \hline
     runtime (sec) & min & max & mean & stdev \\[0.5ex]
    \hline\hline
 %   JSupport & 2.381 & 165.157 & 21.533 & 23.533 \\[0.5ex]
    Z3   & 0.005 & 2.335 & 0.192 & 0.355 \\[0.5ex]
    Yices &   0.014  & 13.297   & 0.589 & 1.473 \\[0.5ex]
    SMTInterpol& 0.029 & 19.254 &  1.396 & 2.991 \\[0.5ex]
    MathSAT & 0.011 & 86.421 &  3.071 & 10.403 \\[0.5ex]
    \hline
  \end{tabular} \\
  \label{tab:runtime-ucalg}
\end{table}

\begin{table}
  \caption{Overhead of \ucalg computations using different solvers}
  \centering
  \begin{tabular}{ |c||c|c|c|c| }
    \hline
     solver & min & max & mean & stdev \\[0.5ex]
    \hline
    Z3   & 0.73\% & 84.13\% & 17.38\% & 16.92\% \\[0.5ex]
    Yices &   0.17\%  & 351.47\%   & 52.20\% & 54.50\% \\[0.5ex]
   SMTInterpol& 1.46\% & 175.75\% &  46.81\% & 37.35\%\\[0.5ex]
    MathSAT & 0.78\% & 955.52\% &  80.21\% & 112.92\%\\[0.5ex]
    \hline
  \end{tabular}
  \label{tab:overhead-ucalg}
\end{table}

\begin{table}
  \caption{\ucbfalg runtime}
  \centering
  \begin{tabular}{ |c||c|c|c|c| }
    \hline
     runtime (sec) & min & max & mean & stdev \\[0.5ex]
    \hline
    Yices &   0.68  & 3600.0   & 91.59 & 490.01 \\[0.5ex]
    \hline
    Z3 &   0.66  & 3600.0   & 93.01 & 490.27 \\[0.5ex]
    \hline
  \end{tabular}
  \label{tab:runtime-ucbfalg}
\end{table}

\begin{table}
  \caption{Overhead of \ucbfalg algorithm}
  \centering
  \begin{tabular}{ |c||c|c|c|c| }
    \hline
     solver & min & max & mean & stdev \\[0.5ex]
    \hline
    Yices & 122.50\%  & 30092.78\%   & 3195.90\% & 3896.05\% \\[0.5ex]
    \hline
    Z3 & 101.70\%  & 28114.07\%   & 3190.18\% & 4119.14\% \\[0.5ex]
    \hline
  \end{tabular}
  \label{tab:overhead-ucbfalg}
\end{table}
% over = in

\begin{table}
  \caption{Aggregate IVC sizes produced by \ucalg\ using different inductive algorithms and solvers}
  \centering
  \begin{tabular}{ |c|c|c|c| }
    \hline
     solver & PDR & $k$-induction & \textbf{total} \\
    \hline
      Z3 & 2378 & 2379 & 4757 \\
      Yices & 2384 & 2376 & 4760 \\
      MathSAT & 2375 & 2369 & 4744 \\
      SMTInterpol & 2378 & 2368 & 4746 \\
    \hline
      \textbf{total} & 9515 & 9492 &   \\
    \hline
  \end{tabular}
  \label{tab:minimality-algorithm-solvers}
\end{table}

\begin{table}
  \caption{Increase in IVC Size for \ucalg\ vs.\ \ucbfalg}
  \centering
  \begin{tabular}{ |c||c|c|c|c| }
    \hline
     solver & min & max & mean & stdev \\[0.5ex]
    \hline
    Yices &   0.0\%   & 725.0\% & 20.54\% & 50.47\% \\[0.5ex]
    Z3 &   0.0\%   & 725.0\% & 20.81\% & 50.34\% \\[0.5ex]
    \hline
  \end{tabular}
  \label{tab:increase-ucalg-ucbfalg}
\end{table}

\begin{figure*}
  \centering
  \includegraphics[width=0.75\textwidth]{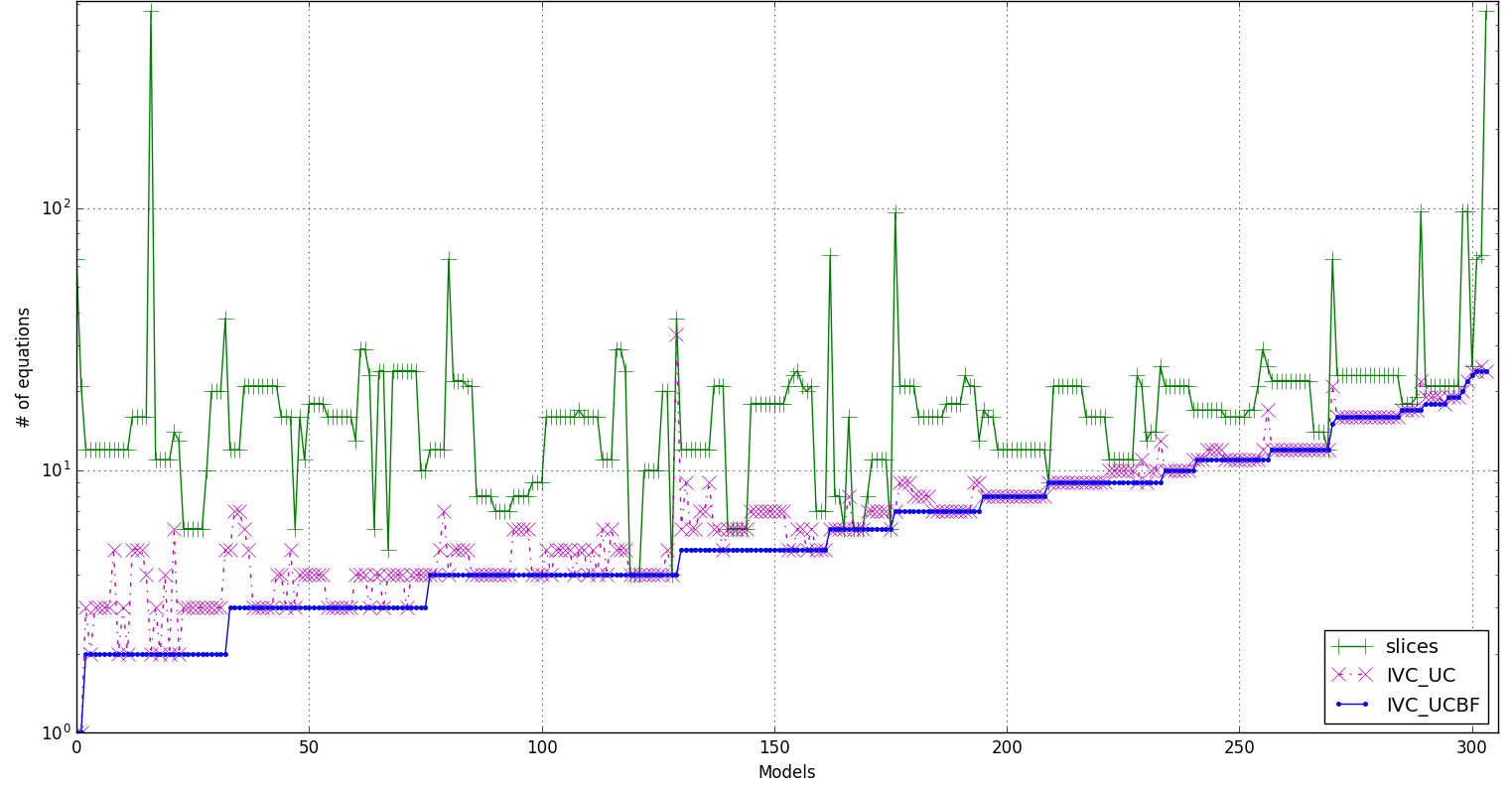}
  \vspace{-0.1in}
  \caption{IVC sizes produced by \ucalg, \ucbfalg for Yices vs. static slices}
  \label{fig:minimality-all}
\end{figure*}

\subsection{Minimality}
\label{sec:minimality}
In this section, we examine the minimality of the cores computed by
the \ucalg\ and \ucbfalg\ algorithms using different inductive proof methods, and we compare both algorithms against a {\em backward static slice}~\cite{Tip95asurvey} of the Lustre program starting with the property of interest.
There are three interesting aspects to be examined
related to this research question.  First (\textbf{RQ2.1}), does the
choice of SMT solver or algorithm used to produce a proof
($k$-induction or PDR) matter in terms of the minimality of the
inductive core?  As mentioned in Section~\ref{sec:ivc}, the \ucalg
algorithm is not guaranteed to produce a minimal core due in part to
the role of invariants used in producing a proof; as $k$-induction and
PDR use substantially different invariant generation algorithms, it is
likely that the set of necessary invariants for proofs are dissimilar,
and that this would in turn affect the number of model elements required for
the proof.  It is possible that one or the other algorithm is more likely
to yield smaller invariant sets.  In addition, differences in the choice of the
UNSAT core algorithms in the different solvers could affect the size of the
generated core. However, our algorithm already performs a minimization
step on UNSAT cores, and thus the only differences would be due to one
algorithm leading to a different minimal core than another.

As discussed in Section~\ref{sec:experiment}, $k$-induction is unable to solve all of the analysis problems; therefore we include only models that are solvable using {\em both} $k$-induction and PDR by {\em all solvers}, 304 models in all.  Examining the aggregate data in Table~\ref{tab:minimality-algorithm-solvers}, we can see the sizes of cores produced by different algorithms and solvers.

%if we sum all elements of all cores together, that PDR has an smaller core size in aggregate than $k$-induction.
%However, the data is noisy, and to examine \textbf{RQ2.1} systematically, we construct a hypothesis that PDR will, in general, equal or outperform $k$-induction on an arbitrary model:

%\mike{ADD HYPOTHESIS/NULL HYPOTHESIS HERE}

%Although the aggregate data suggests that PDR will yield a smaller core (on average) than $k$-induction, this claim is not supported for a given model with significance.

\takeaway{Neither PDR nor $k$-induction yields a smaller inductive
  validity core in general. The choice of underlying SMT solver does
  not substantially affect the size of the inductive validity cores.}

%we already perform a linear scan of the cores generated by the SMT solver to remove unnecessary conjuncts

The next question (\textbf{RQ2.2}) asks how close to minimal are the
cores produced by \ucalg vs.\ the (guaranteed minimal) cores produced
by the \ucbfalg algorithm? Note that we cannot measure the distance on
all models because the combined algorithm times out on 9 of the larger
models. We therefore examine the distance from minimal cores produced
by the combined algorithm for models in which it completes within the
one hour timeout. For comparison, we run the \ucalg algorithm using Z3
and Yices with JKind's default {\em fastest} algorithm, which will use the
result of either $k$-induction or PDR. A graph showing the size of the
IVCs for each model produced using the Yices solver is shown in
Figure~\ref{fig:minimality-all}. In the figure, the models are ranked
along the x-axis by the size of the core produced by \ucbfalg. The
figure demonstrates that while on average there is a modest change in
minimality, there can be substantial variance on the sizes of the
cores produced by the \ucalg algorithm. Summary statistics are shown
in Table~\ref{tab:increase-ucalg-ucbfalg}.

\takeaway{The \ucalg algorithm computes cores that are on average 21\%
  larger than those produced by \ucbfalg, with substantial variance in
  some cases.}

The final question (\textbf{RQ2.3}) asks how well the approach compares to {\em backwards static slicing}~\cite{Tip95asurvey}, since slicing also reduces the set of model elements necessary to construct a proof.  We start the slice from the equation defining the property of interest, and use the usual approach~\cite{Gaucher03:slicing} that performs an iterative backward traversal from the variables used within an equation to their defining equations.  We expect the IVC mechanism to be more precise, because the slice overapproximates of the set of equations necessary for {\em any} proof.  This claim is demonstrated in Figure~\ref{fig:minimality-all}; slices are (mean) 406\% larger than the IVCs produced by our \ucalg algorithm and 465\% larger than those produced by \ucbfalg algorithm.

\takeaway{Both IVC algorithms compute cores that are usually much smaller than backwards static slices.}

Comparing the sizes of the \ucalg IVCs to the original models, the original 395 benchmark models from~\cite{Hagen08:FMCAD} already had applied slicing, so there is no difference between the sliced size and the original model size.  For the remaining 81 benchmarks, the number of equations is (mean) 2500\% larger than the \ucalg\ IVCs.  We note, however, that comparison of IVC size against the original model size can be misleading, as the improvement can easily be ``gamed'' by adding equations that are irrelevant to the property. %    For this reason, we prefer comparison to the sliced models.
%It should be mention that some of the benchmark models come \emph{pre-sliced}, but for those that are not, the number of equations is (mean) 2500\% larger than the \ucalg IVCs. Figure~\ref{fig:minimality-all} compares the sliced sizes with the IVCs.

\iffalse
\ela{I think this is not needed: \\In terms of size, we calculated the size of the biggest and smallest sets per model, then added them together for all models. The same calculation has been done for \texttt{JSupport}:
\begin{itemize}
  \item $A\_JS$: the aggregate number of elements in support sets computed by \texttt{JSupport} = 3078
  \item $A\_Ss$: the aggregate number of elements in the \emph{smallest} support sets = 3474;
  this implies $A\_Ss$ is 12\% greater than $A\_JS$.
  \item $A\_Bs$: the aggregate number of elements in the \emph{biggest} support sets = 3586;
  this implies $A\_Bs$ is 16\% greater than $A\_JS$.
\end{itemize}}
Average size of sets computed by \texttt{ReduceSupport} is 8.55. And, average size of sets computed by \texttt{JSupport} is 7.60. Therefore, in average, support sets computed by \texttt{ReduceSupport} are 88\% close to minimal, in terms of their size.  %  (1 - ((8.55 - 7.6)/ 7.6))  * 100 = 87.5%
Since \texttt{JSupport}, with a great percentage, most of the time computed the smallest support set, we compared the size of the sets computed by \texttt{ReduceSupport} with \texttt{JSupport}. For each configuration, we collected the difference between its support size and \texttt{JSupport} per model. Table~\ref{tab:minimality} shows the result of analysis.
\fi

\begin{table}
  \caption{Pairwise Jaccard distances among all models}
  \centering
  \begin{tabular}{ |c|c|c|c| }
    \hline
     min & max & mean & stdev \\[0.5ex]
    \hline
    %sample size = 4196
     0.0   & 0.878 & 0.026 & 0.059 \\[0.5ex]
    \hline
  \end{tabular}
  \label{tab:jaccard-avg}
\end{table}

\begin{figure*}
  \centering
  \vspace{3mm}
  \includegraphics[width=0.75\textwidth]{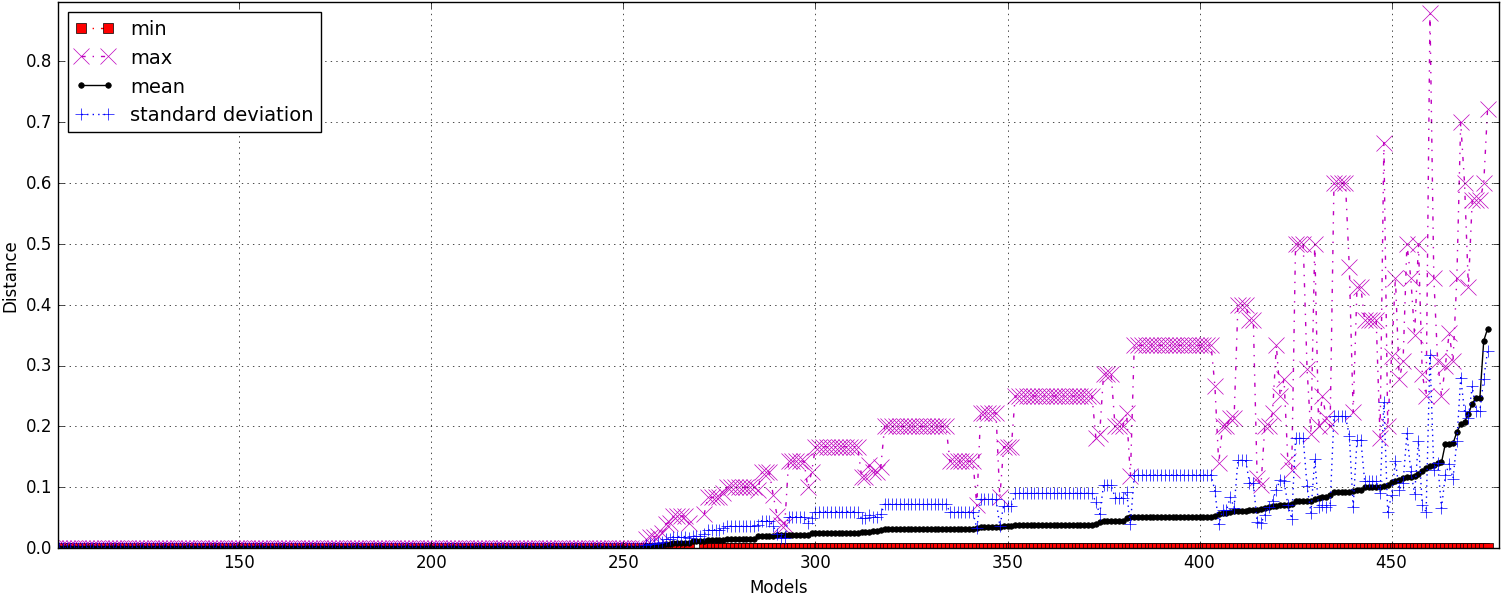} \\
  \vspace{-0.1in}
  \caption{Pairwise Jaccard distance between IVCs}\label{fig:jacdis}
\end{figure*}

\subsection{Diversity}
\label{sec:diversity}

%% \andrew{We talk a lot about minimality here, but we are computing
%%   jaccard distance among the sets returned by \ucalg which does not
%%   guarantee minimality.}

%% \ela{That's right. we also had a definition/ table of overall distance
%%   based on the minimal pattern. but Maybe this one could show how much
%%   diversity exists between sets. They don't have to be minimal. but,
%%   we wanted to know how similar they are...}

Recall from Section~\ref{sec:ivc} that a {\em minimal}
%\ela{I think using \emph{a core} to refer to a set of cores could be confusing}
IVC set is any set leading to a proof such that if you remove any of
its elements, it no longer produces a proof. For certain models and
properties, it is possible that there are many minimal cores that will
lead to a proof. In this section, we examine the issue of diversity:
do different solvers and algorithms lead to {\em different} minimal
cores? This is both a function of the models and the solution
algorithms: for certain models, there is only one possible minimal IVC
set, whereas other models might have many. Given that there are
multiple solutions, the interesting question is whether using
different solvers and algorithms will lead to different solutions.
The reason diversity is considered is that it has substantial relevance to
some of the uses of the tool, e.g., for constructing multiple traceability
matrices from proofs (see Section~\ref{sec:conc}).
Note that our exploration in this experiment is not
exhaustive, but only exploratory, based on the IVCs returned by different
algorithms and tools; we leave exhaustive exploration of
IVCs for future work.

%Given diversity of results, we may wish to
%distinguish {\em must} traceability elements from {\em may}
%traceability elements across a set of diverse solutions, and consider
%more systematic explorations of diversity in future work.

To measure diversity of IVCs, we use Jaccard distance:
\begin{definition}{\emph{Jaccard distance:}}
  \label{def:dj}
  $d_J(\small{A}, \small{B}) = 1 - \frac{|A \cap B|}{|A \cup B|} ,\\ 0 \leq d_J(\small{A}, \small{B}) \leq 1$
\end{definition}
\noindent Jaccard distance is a standard metric for comparing finite
sets (assuming that both sets are non-empty) by comparing the size of
the intersection of two sets over its union. For each model in the
benchmark, the experiments generated 13 potentially different IVCs. Therefore, we
obtained $\binom{13}{2} = 78$ combinations of pairwise distances per
model. Then, minimum, maximum, average, and standard deviation of the
distances were calculated (Figure~\ref{fig:jacdis}), by which, again,
we calculated these four measures among all models. As seen in
Table~\ref{tab:jaccard-avg}, on average, the Jaccard distance between
different solutions is small, but the maximum is close to 1, which
indicates that even for our exploratory analysis, there are models for
which the tools yield substantially diverse solutions. The diversity
between solutions is represented graphically in
Figure~\ref{fig:jacdis}, where for each model, we present the min,
max, and mean pairwise Jaccard distance of the solutions produced by algorithm
\ucalg for each model, ranked by the mean distance.

\subsection{Discussion}

In the previous section, we presented three algorithms for determining
inductive validity cores. The brute-force algorithm is guaranteed
minimal, but is often very slow. The other two algorithms, the UNSAT
core algorithm \ucalg and the combined algorithm \ucbfalg, represent
interesting trade-offs. The \ucalg algorithm is much faster, but is
not guaranteed to be minimal; the result of this algorithm can be
further, and sometimes quickly, refined by the combined algorithm.
Thus, we can choose to trade off speed for guaranteed minimality using
these two algorithms; the combined algorithm can be viewed as a
refinement algorithm that we can terminate either at completion or
after a fixed time bound.

%One question that arises is: why is the refinement so much quicker than the brute-force approach?  In our preliminary examination, it appears to be because of the actions performed by JKind on the sub-problems.  When the cores produced by \ucalg\ are minimal or close to minimal, it tends to be the case that removing conjuncts from the core leads to short counterexamples.  When performing the brute-force algorithm, removing the irrelevant pieces of the model require repeated proof search.

Although our experiment does not ask statistical questions, it is still worth examining threats towards generalizing our results.  First, are the models and properties that we chose representative?  We started from an existing benchmark from another research group suite to try to assuage this concern, but most of these models were small, so we extended the benchmark suite with 81 of our own models.  It is possible that our additions skew the results, though these models are immediately derived from previously published work and not modified for our analysis here.  Second, our models and tools use the Lustre language, which is equational, rather than conjuncted transition systems; it is possible (though, in our opinion, unlikely) that arbitrary conjuncts rather than equations will yield different performance or minimality characteristics.

Our approach is limited by the capabilities of the SMT solvers and inductive model checking algorithms that are used.  For example, it is difficult, given state of the art SMT solvers, to produce proofs involving complex models involving non-linear floating-point arithmetic.  However, given an inductive proof produced by an UNSAT-core-producing SMT solver, we feel confident that the \ucalg\ algorithm can produce an IVC.  Our approach is theory and invariant-generator agnostic, so as inductive model checking algorithms evolve and SMT solvers add support for new theories, the IVC algorithm should be able to work without modification.

%%  LocalWords:  IVC RQ PDR UNSAT Yices stdev JSupport JS Ss Bs
%%  LocalWords:  ReduceSupport Jaccard conjuncted

\section{Related work}
\label{sec:related}

%\andrew{This section seems a bit strange since it focuses on MUS more
%  than IVC. Are people doing this for SAT/SMT solving or for model
%  checking?}

Our work builds on top of a substantial foundation building Minimally Unsatisfiable Subformulas
(MUSes) from UNSAT cores~\cite{Cimatti2007:UNSAT}, including \cite{marques2010minimal, belov2012towards, ryvchin2011faster, belov2012computing, nadel2010boosting}.  Recent algorithms can handle very large problems, but computing MUSes is still a resource-intensive task.  While some work is aimed at providing a set of minimal unsatisfiable formulae, minimality is usually defined such that given a set of clauses $\mathbb{M}$, removing any member of $\mathbb{M}$ makes it satisfiable \cite{belov2012computing}.  The step of producing minimal invariants for proofs has been investigated in depth by Ivrii et al. in~\cite{Ivrii14:invariants}.

UNSAT cores and MUSes are used for many different activities within
formal verification. Gupta et al. \cite{gupta2003iterative} and
McMillan and Amla \cite{mcmillan2003automatic} introduced the use of
unsatisfiable cores in proof-based abstraction engines. Their goal is
to shrink the abstraction size by omitting the parts of the design
that are irrelevant to the proof of the property under verification.
Torlak et al. in~\cite{Torlak08:cores} finds MUSes of Alloy
specifications, and considers semantic vacuity, which we consider in
Section~\ref{sec:intro}. Alloy models are only analyzed up to certain
size bounds, however, and in general are unable to prove properties
for arbitrary models. Also, because we are extracting information from
proofs, it is possible to use IVCs for additional purposes (proof
explanation and completeness checking).

%[SLICING]
If we view Lustre as a programming language, our work can be viewed as a more accurate form of program slicing~\cite{Tip95asurvey}.  We perform {\em backwards slicing} from the formula that defines the property of interest of the model.  The slice produced is smaller and more accurate than a static slice of the formula~\cite{Weiser:1981:slicing}, but guaranteed to be a sound slice for the formula for all program executions, unlike dynamic slicing~\cite{Agrawal:1990:slicing}.  Predicate-based slicing has been used~\cite{Li04:slicing} to try to minimize the size of a dynamic slice.  Our approach may have utility for some concerns of program slicing (such as model understanding) by constructing simple ``requirements'' of a model and using the tool to find the relevant portions of the model.

Another potential use of our work is for ``semantic'' vacuity detection.  A standard definition of vacuity is syntactic and defined as follows~\cite{Kupferman:2006:SCF}: {\em A system K satisfies a formula $\phi$ vacuously iff $K \vdash \phi$ and there is some subformula $\psi$ of $\phi$ such that $\psi$ does not affect $\phi$ in K}.  Vacuity has been extensively studied~\cite{Gurfinkel:2012:RVB,Chockler2008,DBLP:Ben-DavidK13,Kupferman:2006:SCF,Chockler:2007,Beer1997} considering a range of different temporal logics and definitions of ``affect''.  On the other hand, our work can be used to consider a broader definition of vacuity.  Even if all subformulae are required (the property is not syntactically vacuous), it may not require substantial portions of the model, and so may be provable for vacuous reasons.  The problem is exacerbated when the modeling and property language are the same (as in JKind), because whether a subformula is considered part of the model or part of the property, from the perspective of checking tools, can be unclear.

Determining completeness of properties has also been extensively studied. Certification standards such as DO-178C~\cite{DO178C} require that requirements-derived tests achieve some level of structural coverage (MC/DC, decision, statement) depending on the criticality level of the software, in order to approximate completeness.  If coverage is not achieved, then additional requirements and tests are added until coverage is achieved.  Chockler~\cite{chockler_coverage_2003} defined the first completeness metrics directly on formal properties based on mutation coverage.  Later work by Kupferman et al.~\cite{Kupferman:2006:SCF} defines completeness as an extension of vacuity to elements in the model.  We present an alternative approach that uses the proof directly, which we expect to be considerably less expensive to compute.  Recent work by Murugesan~\cite{murugesan2015we} and Schuller~\cite{schuler_assessing_2011} attempts to combine test coverage metrics with requirements to determine completeness.

%[DETERMINING VACUITY]
%- Vardi and Kupferman's work

\iffalse
\begin{itemize}
    \item MUS's : checked
    \item Work on Alloy: checked
    \item Work that Teme pointed us to : will be added
    \item Anything else Elaheh has found : \%60 checked
\end{itemize}
\fi

%%  LocalWords:  Subformulas MUSes UNSAT formulae et al Amla Torlak
%%  LocalWords:  IVCs iff subformula subformulae criticality Chockler
%%  LocalWords:  Kupferman Vardi Murugesan Schuller MUS's Teme Elaheh

\section{Conclusions \& Future Work}
\label{sec:conc}

In this paper, we have defined the notion of inductive validity core (IVC) which
appears to be a useful measure in relation to a valid safety property
for inductive model checking. We have presented a novel algorithm for
computing IVCs that are nearly minimal and have shown that full
minimality is undecidable in many settings. Our algorithm is
applicable to all forms of inductive SAT/SMT-based model checking
including $k$-induction, PDR, and
interpolation-based model checking.
We have implemented our IVC algorithm as part of the open source model
checker JKind. We have shown that the algorithm requires only a
moderate overhead and produces nearly minimal IVCs in practice.
Moreover, the produced IVCs are fairly stable with respect to
underlying proof engines ($k$-induction and PDR) and back-end SMT
solvers (Yices, Z3, MathSAT, SMTInterpol).

Our work has recently been integrated into the AADL/AGREE tool
suite~\cite{QFCS15:backes,hilt2013}, which supports compositional
reasoning about system architectures.  First, IVCs are used to
to automatically compute traceability information between high- and
low-level requirements in compositional proofs. Second, IVCs are
used by the AGREE symbolic simulator to explain conflicts when the
simulator is not able to compute a ``next state'' for a set of chosen
constraints.  A pilot project at Rockwell
Collins is using the traceability information produced by the IVC
support in the AGREE tool.

In future work, we will compare the traceability matrices
generated by IVCs with those produced by human experts and and by
automated heuristic approaches.  Our expectation is that the traceability
information produced by IVCs will be both more accurate and closer to
minimal than other approaches.
We also will examine the impact of multiple distinct IVCs on traceability
research.  An initial paper on this work, which we call {\em complete traceability}
has been accepted to the RE@Next! track of the Requirements Engineering
conference~\cite{Murugesan16:renext}.  We are interested in diversity both
in terms of regression analysis for testing and proof, as well as examining
the underlying sources of diversity in our analysis models.  We suspect that
in some cases, it indicates fault tolerance in the architecture under analysis,
and in other cases it may indicate redundancy in requirements specifications
for subcomponents.  To support a systematic investigation of diversity, we
plan to investigate algorithms for exploring the space of IVCs, e.g., finding
a minimum, rather than minimal IVC, or finding all IVCs.

Finally, we are in the process of
comparing our approach against other approaches measuring completeness of
requirements (such as those in~\cite{chockler_coverage_2003, Kupferman:2006:SCF, kupferman_theory_2008}).
%

%% \begin{itemize}
%%     \item Write this at the end.
%%     \ela{We can add something about fault tolerance for future work.
%%     if we have all sets of support, and their intersection is empty, we have redundancy.
%%     we can talk about algorithm for minimum support set...}
%% \end{itemize}

%%  LocalWords:  IVC IVCs PDR Yices MathSAT SMTInterpol

%ACKNOWLEDGMENTS are optional
%TODO: Fill in for final version
%\vspace{0.05in}
\textbf{Acknowledgments:}
This work was supported by DARPA under contract FA8750-12-9-0179 (Secure Mathematically-Assured Composition of Control Models) and by NASA under contract NNA13AA21C (Compositional Verification of Flight Critical Systems).

%We thank XXXX

\bibliographystyle{abbrv}
\bibliography{biblio}

% This ~ seems to fix an odd bibliography alignment issue
~

%\ifdefined\TECHREPORT
%\appendix
%
%\section{Appendix: Proof of Equivalence}
%\input{appendix}
%\fi

%\section{Appendix: GPCA CENTA Model}
%\label{appendix:gpcacenta}
%\begin{figure}[!ht]
%\begin{center}
%\includegraphics[scale=0.6]{images/sampled_pca.PNG} %[trim = 0 2 0 0, clip=true]{Comp}
%\caption{GPCA AGREE Properties modeled as a Timed Automata} \label{fig:samplepca}
%\end{center}
%\end{figure}

%\balancecolumns

\end{document}